\newtheorem{theorem}{Theorem}
\newtheorem{lemma}{Lemma}
\newtheorem{definition}{Definition}
\newcommand{\sgn}{\text{sgn}}
\renewcommand{\Re}{\text{Re}}
\title{Optimizing quantum circuit parameters via SDP}
\author{Eunou Lee\thanks{\noindent QCenter, Sunkyunkwan University, South Korea \indent 
\href{mailto:eunoulee@skku.edu}{eunoulee@skku.edu}
}}
\date{}
\begin{document}

\maketitle
\begin{abstract}
In recent years, parameterized quantum circuits have become a major tool to design quantum algorithms for optimization problems.
The challenge in fully taking advantage of a given family of parameterized circuits lies in finding a good set of parameters in a non-convex landscape that can grow exponentially to the number of parameters. 

We introduce a new framework for optimizing parameterized quantum circuits: round SDP solutions to circuit parameters.
Within this framework, we propose an algorithm that produces approximate solutions for a quantum optimization problem called Quantum Max Cut.
The rounding algorithm runs in polynomial time to the number of parameters regardless of the underlying interaction graph.

The resulting 0.562-approximation algorithm for generic instances of Quantum Max Cut improves on the previously known best algorithms, which give approximation ratios of less than 0.54.
\end{abstract}

\section{Introduction}
Since Farhi, Goldstone, and Gutmann proposed the Quantum Approximate Optimization Algorithm (QAOA) \cite{FGG14}, parameterized quantum circuits have become a major tool to design quantum algorithms for optimization problems.
For example, the growing area of Quantum Variational Algorithms (VQA), where parameterized quantum circuits are optimized in a variational manner, is considered a leading candidate to deliver quantum advantage over classical computation using the current quantum machines \cite{VQA21}.

A parameterized quantum circuit is a quantum circuit whose components are determined by given circuit parameters $\vec\theta$ and output a state  $|\phi(\vec{\theta})\rangle$ in the end.
Given a parameterized circuit, the challenge in fully taking advantage of it lies in finding a good set of parameters in a non-convex landscape that can grow exponentially to the number of parameters. 
VQAs attempt to find locally optimal parameters, without guarantees on how good the output states are for the given optimization problem.
Other parameterized quantum circuit algorithms provide rigorous guarantees on the quality of their output states, often in the form of approximation ratios.
To make the search for good parameters more tractable, existing rigorous algorithms optimize the parameters over a small portion of the entire parameter space, by assuming the parameters to be equal on each circuit layer or on the entire circuit \cite{BBT07, GK11, GK12, BH16, BGKT19, GP19, HLP20, AGM20,KCM21, PT21a, PT21b,AGMS21,HNPTW21}.
As another limitation, the existing rigorous algorithms only work on specific instances such as those with regular interaction graphs or with bounded light-cone sizes.

In this paper, we introduce a novel way to optimize circuit parameters rigorously and efficiently without reducing the parameter space or limiting the algorithm to special case instances.
We do so by applying the ideas of the seminal Goemans-Williamson algorithm \cite{GW95}; for each edge of the interaction graph, we round an SDP solution vector attached to it to a circuit parameter on the edge.
In this way, polynomially many circuit parameters can be efficiently and rigorously optimized.
The parameterized quantum circuit we optimize in this study was initially proposed by Anshu, Gosset, and Morenz \cite{AGM20} to approximate degree-bounded regular instances of Quantum Max Cut.
We obtain a 0.562-approximation algorithm for generic instances of Quantum Max Cut, improving on the previously known best approximation algorithms for the problem, which give approximation ratios of less than 0.54 \cite{AGM20, PT21a}.

Quantum Max Cut is a QMA-hard quantum optimization problem that has recently been studied in the context of approximation \cite{GP19, AGM20, PT21a, PT21b,KP22, PT22}. 
The problem asks one to find the largest eigenvalue $\lambda_{max}(H)$ of a 2-local Hamiltonian 
defined by the local term $H_{ij}= (I- X_i X_j - Y_i Y_j- Z_i Z_j)/4 =(|01\rangle - |10\rangle)_{ij}(\langle01 |-\langle10 |)_{ij}/2$.
Note that the local term is a projector onto the singlet state, giving it the name Quantum Max Cut.
More formally, we give the following definition. 
\begin{definition}[Pauli matrices]\label{def:q_max_cut}
We use the standard definition of the Pauli matrices:
\begin{align*}
I =
\begin{pmatrix}
1 & 0\\
0 & 1
\end{pmatrix}, \ 
X =
\begin{pmatrix}
0 & 1\\
1 & 0
\end{pmatrix},\
Y =
\begin{pmatrix}
0 & -i\\
i & 0
\end{pmatrix},\
Z =
\begin{pmatrix}
1 & 0\\
0 & -1
\end{pmatrix}.
\end{align*}
An indexed Pauli matrix works on a specific qubit in a multi-qubit system.
Formally, 
$P_i :=I \otimes  \dots \otimes I \otimes P \otimes I \otimes \dots I$, where the matrix $P \in\{X,Y,Z\}$ is in the $i$-th place.  
\end{definition}
\begin{definition}[Quantum Max Cut]
Given a weighted graph $G = (V, E, W),$ find the maximum eigenvalue of the following Hamiltonian:
\begin{align*}
    H = \sum_{(i,j)\in E} w_{ij} H_{ij}, \label{eq:qmc} \tag{$H$}
\end{align*}
where $H_{ij} = \frac{1}{4}(I- X_i X_j - Y_i Y_j - Z_i Z_j)$, and $ w_{ij} \ge 0$ for all $(i,j)\in E$. We denote the maximum eigenvalue as $\text{OPT}:=\lambda_\text{max}(H)$.
\end{definition}

Because Quantum Max Cut is QMA-hard \cite{PM17}, the exact optimum is not expected to be found efficiently. 
As a natural reaction to this hardness, multiplicative approximation algorithms have been studied for the problem.
An $\alpha$-approximation algorithm returns a state $|\phi\rangle$ with its objective value $\langle\phi|H|\phi\rangle$ guaranteed to be high relative to the optimum, satisfying $\langle\phi|H|\phi\rangle \ge\alpha\lambda_{max}(H)$ for a constant $\alpha\in[0,1]$.

Given that the Goemans-Williamson algorithm \cite{GW95} approximates Max Cut optimally up to the unique games conjecture \cite{KKEO07} and that the evaluation of 2-local Hamiltonians and the cut of graphs share the similarity of being quadratic, it is natural to adopt the Goemans-Williamson algorithm in some form to approximately optimize Quantum Max Cut.
This path was indeed followed by all existing approximation algorithms for Quantum Max Cut \cite{GP19, AGM20, PT21a, PT21b,KP22, PT22} and for the Max 2-local Hamiltonian with positive-semidefinite (PSD) terms \cite{HLP20,PT21b, PT22}, which contains Quantum Max Cut as a sub-problem.

To apply the ideas from the Goemans-Williamson algorithm \cite{GW95} to Quantum Max Cut, two things are required: an SDP that represents the optimization problem, and an algorithm to round the SDP solution to a quantum state.

In the case of the Quantum Max Cut and Max 2-local Hamiltonian, a natural set of SDPs called the quantum Lasserre hierarchy \cite{ANP08} has been used by existing algorithms.
Roughly speaking, a level-$k$ SDP in the hierarchy forces its feasible solutions to be consistent with quantum states up to $k$-qubit correlation.
Because the Lasserre SDPs of a higher level are closer to the physical reality, they may yield better results for a given optimization problem.
Some algorithms \cite{GP19, HLP20, AGM20, PT21b} employ level-1 SDPs while others employ \cite{PT21a, PT22} level-2 SDPs. 
The recent Unique Games-hardness result \cite{HNPTW21} of  0.956-approximation for Quantum Max Cut was also proved using a level-1 Lasserre SDP.
The algorithms using the level-2 SDPs \cite{PT21a, PT22} demonstrate that approximation ratios are improved compared to the level-1 SDP algorithms.
Especially, \cite{PT21a} showed that a monogamy of entanglement relation can be derived from the SDP solutions, which we also use in this paper. 
The SDP we use in this paper is of level-2, inspired by and resembling the SDPs in \cite{PT21a,PT22}. 

We now give a formal definition of the quantum Lasserre hierarchy.
\begin{definition}[$\mathcal P_n(k)$]
Given a natural number $k$, we define the set $\mathcal{P}_n(k)$ as the set of Pauli matrices  with a Hamming weight less than or equal to $k$ on $n$ qubits:
\begin{align*}
    \mathcal{P}_n(k) =\{ \prod_{i\in A} P_i | A\subset [n], |A|\le k, P_i\in\{X_i, Y_i, Z_i\}\} .
\end{align*}
\end{definition}

\begin{definition}[Canonical Lasserre]
Given a weighted graph $G = (V,E, W),$ we define the level-$k$ Lasserre $L_k(G)$ for Quantum Max Cut on the graph $G$ as the following SDP:

\begin{align*} \label{eq:lasserre} \tag{$L_k$}
\text{Maximize}\ \sum_{(i,j) \in E} \frac{1}{4}w_{ij } (M(I,I) - M(X_i X_j, I) - M(Z_i Z_j, I) - M(Z_i Z_j, I)) \ \ \ \ \
\end{align*}
\begin{align*}
\text{subject to:}\ &\\
& M(\Phi, \Phi) = 1,  &\forall \ \Phi\in \mathcal P_n(k) \\
& M(\Phi, \Psi) =0,   &\forall \ \Phi,\Psi \in \mathcal P_n(k) \text{ s.t. } \Phi\Psi + \Psi\Phi =0  \\
& M(\Phi_1, \Psi_1) = M(\Phi_2, \Psi_2),   &\forall \  \Phi_1,\Psi_1,\Phi_2,\Psi_2 \in \mathcal P_n(k) \text{ s.t. } \Phi_1\Psi_1 = \Phi_2 \Psi_2\\
& M(\Phi_1, \Psi_1) = -M(\Phi_2, \Psi_2),   &\forall \ \Phi_1,\Psi_1,\Phi_2,\Psi_2 \in \mathcal P_n(k) \text{ s.t. } \Phi_1\Psi_1 = -\Phi_2 \Psi_2\\
& M \succeq 0, \\
& M \in \mathbb{R}^{|\mathcal P_n(k)| \times |\mathcal P_n(k)|}.
\end{align*}
\end{definition}

\begin{lemma}\label{lem:q_lasserre}
The level-$k$ Lasserre $L_k(G)$ is a relaxation of Quantum Max Cut for any $k\in [n]$. 
Meaning, for any $k\in[n]$ and an arbitrary $n$-qubit state $|\phi\rangle$,
there exists a matrix $M$ that satisfies the constraints of $L_k(G)$ and its objective value for $L_k(G)$ is at least $\langle\phi|H|\phi\rangle$.
\end{lemma}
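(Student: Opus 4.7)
The plan is to build a feasible $M$ directly from $|\phi\rangle$ by setting
\[
M(\Phi,\Psi) \;:=\; \Re\langle\phi|\,\Phi\Psi\,|\phi\rangle
\]
for every pair $\Phi,\Psi\in\mathcal{P}_n(k)$, and then check one-by-one that this matrix meets each SDP constraint and achieves objective value exactly $\langle\phi|H|\phi\rangle$. Taking the real part is what makes $M$ a real matrix; it is harmless because any two Pauli strings either commute or anticommute, and the imaginary part of $\langle\phi|\Phi\Psi|\phi\rangle$ only appears in the anticommuting case, which the SDP forces to zero anyway.

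First I would dispatch the scalar constraints. Every Pauli string squares to the identity, giving $M(\Phi,\Phi)=\langle\phi|I|\phi\rangle=1$. If $\Phi\Psi+\Psi\Phi=0$, then $(\Phi\Psi)^\dagger=\Psi\Phi=-\Phi\Psi$, so $\Phi\Psi$ is anti-Hermitian and $\langle\phi|\Phi\Psi|\phi\rangle$ is purely imaginary, making $M(\Phi,\Psi)=0$. The two ``product equality'' constraints are immediate: if $\Phi_1\Psi_1 = \pm\Phi_2\Psi_2$ then the corresponding inner products agree up to the same sign, and so do their real parts.

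Next I would verify positive semidefiniteness. For any real vector $v\in\mathbb{R}^{|\mathcal{P}_n(k)|}$, the Hermitian operator $A:=\sum_\Phi v_\Phi\Phi$ satisfies
\[
v^\top M\,v \;=\; \Re\langle\phi|\,A^2\,|\phi\rangle \;=\; \langle\phi|\,A^2\,|\phi\rangle \;\ge\; 0,
\]
where the middle equality uses that $A^2$ is Hermitian and the inequality uses that $A^2\succeq 0$. Hence $M\succeq 0$.

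Finally, I would expand the objective: each term $M(P_iP_j,I)=\Re\langle\phi|P_iP_j|\phi\rangle=\langle\phi|P_iP_j|\phi\rangle$ (which is already real, as $P_iP_j$ is Hermitian), so the SDP objective evaluates to $\sum_{(i,j)\in E}\tfrac{w_{ij}}{4}\bigl(1-\langle\phi|X_iX_j|\phi\rangle-\langle\phi|Y_iY_j|\phi\rangle-\langle\phi|Z_iZ_j|\phi\rangle\bigr)=\langle\phi|H|\phi\rangle$, matching (and in fact equaling) the target value. There is no real obstacle here; the only delicate point is the systematic use of $\Re(\cdot)$ to stay inside $\mathbb{R}$ while remaining compatible with the anticommutation and sign-flip constraints, and the argument above shows that choice is consistent with every constraint.
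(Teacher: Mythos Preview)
Your proposal is correct and follows essentially the same approach as the paper: define $M(\Phi,\Psi)=\Re\langle\phi|\Phi\Psi|\phi\rangle$ (the paper writes $\Re\langle\phi|\Psi\Phi|\phi\rangle$, which is identical since the two expectations are complex conjugates), then verify each constraint and the objective value directly. Your PSD argument via the Hermitian operator $A=\sum_\Phi v_\Phi\Phi$ and $v^\top M v=\langle\phi|A^2|\phi\rangle\ge 0$ is exactly the paper's computation.
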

\begin{proof}
See Appendix \ref{appendix:q_lasserre}.
\end{proof}

The other component of the Goemans-Williamson scheme is a rounding algorithm that maps SDP solutions to valid quantum states.
The existing rounding algorithms for Quantum Max Cut and Max 2-local Hamiltonian round to product states of 1- or 2-qubit states, with the exception of \cite{AGM20}.
Anshu, Gosset, and Morenz \cite{AGM20} devised a shallow parameterized quantum circuit to generate high energy states when given instances with low-degree regular interaction graphs. 
Our rounding algorithm also makes use of the circuit proposed in \cite{AGM20}, but we improve it by allowing parameters on each edge to be different from each other and to be applied to generic instances.
Because there are much more degrees of freedom in our circuits than in \cite{AGM20}, it is more challenging to optimize the circuits.

The circuit is comprised of mutually commuting gates $\exp[i \theta_{ij} P(i)P(j)]$ on each edge $(i,j)\in E$, where $P(i)$ is a Pauli matrix on the qubit $i$, and $\theta_{ij}$ is a real parameter.
The gates together implement a unitary $U(\vec\theta) = \Pi_{(i,j)\in E} \exp[i \theta_{ij} P(i)P(j)]$.
Our algorithm first randomly rounds the level-1 SDP solution vectors to a random bit string $|z\rangle$.
Then the bit string is evolved by the unitary $U(\vec\theta)$, with the circuit parameters $\vec\theta$ determined by the level-2 SDP solution vectors. 

To illustrate how the circuit works, suppose an initial state $|z\rangle_{i,j} = |01\rangle$ is assigned on the qubits $i,j$.
Our algorithm assigns $P(i) = Y_i, P(j) = X_j$ based on the values assigned to $z_i$ and $z_j$.
The gate on $(i,j)$ will evolve the bit string to the state \begin{align*}
    \exp[i \theta_{ij} Y_i X_j]|01\rangle = \cos \theta_{ij} |01\rangle -\sin\theta_{ij}|10\rangle.
\end{align*}
The state yields the maximum energy 1 on the edge if $\theta_{ij} = \pi/2$, and energy 1/2 if $\theta = \pi/2$.
Thus, we assign a higher value to $\theta_{ij}$ if the SDP vector on $(i,j)$ suggests that the state should be more entangled on the edge.
If any of the vertices $i,j$ is connected to edges than other $(i,j)$, the energy on $(i,j)$ will be affected by the parameters on the neighboring edges as well.
This quantum interference poses a great challenge to the design of the algorithm, and we circumvent it by utilizing the monogamy of entanglement on each vertex.

The algorithm we propose approximates Quantum Max Cut with the ratio 0.562, outperforming existing algorithms.
More important are the novel techniques we utilize to optimize quantum circuits by consulting SDP vectors. 
It is noteworthy that our algorithm outperforms the parameterized circuit algorithm in \cite{AGM20} even though ours works on more general instances. 
This perhaps illustrates the power of circuit optimization based on SDP rounding. 
We believe, however, that our algorithm is not optimal in terms of approximation ratio.
According to Brand\~ao and Harrow \cite{BH16} and Parekh and Thompson \cite{PT22}, an optimal algorithm should be able to generate arbitrary product states, but our algorithm cannot.
More specifically to this problem, the Hamiltonian is symmetric under a unitary $U\otimes U\otimes\dots\otimes U$ for any 1 qubit unitary $U,$ but the algorithm is not symmetric under the same transformation.

In conclusion, the algorithm lies in a narrow intersection of two important classes of algorithms: parameterized quantum circuit algorithms and SDP rounding algorithms. The connection between the two areas provides two different ways to view our algorithm: it is the first quantum SDP rounding algorithm that works on generic Quantum Max Cut instances, and it is the first algorithm that optimize circuit parameters via SDP rounding.

Many natural questions arise. Can we optimize QAOA circuits using SDP? Can SDPs help optimize circuits for practical problems such as the quantum chemistry problems? Can we get a better rounding circuit by optimizing the generalized version \cite{AGMS21} of the circuit we used?
Will level-3 SDPs be helpful for 2-local Hamiltonian problems?
Is the approximation ratio upper bound given in \cite{HNPTW21} optimal?
Can we optimize non-commuting circuits using SDPs?

\section{Our SDP}
In this section, we define a customized level-2 Lasserre SDP for the sake of better notation.
An equivalent way to describe SDP (\ref{eq:lasserre}) is to use a set of vectors as variables of an optimization program instead of a PSD matrix itself. 
This is because given a PSD matrix, one can find a set of vectors of which the Gram matrix is the given PSD matrix (e.g. via the Cholesky decomposition),
and conversely given a set of vectors, their Gram matrix is always PSD. 

We choose to represent the SDP in vectors because the geometric meaning is more accessible in this way. Furthermore, for indexing reasons, we only leave the necessary constraints of 
the previous SDP (\ref{eq:lasserre}),
regarding the $I, X_i, Y_i, Z_i, X_i X_j, Y_i Y_j, Z_i Z_j$ terms.
 
 \begin{definition}
Given a weighted graph $G = (V,E, W),$ we define SDP ($S$) for Quantum Max Cut on the graph $G$ as followes:

\begin{align}\label{eq:sdp} \tag{$S$}
\text{Maximize}\ & \sum_{(i,j) \in E} \frac{1}{4}w_{ij} (v_0 -v_{ij,1} - v_{ij,2} -v_{ij,3})\cdot v_0,\\
\text{subject to:}\ & \nonumber\\
& \|v_0\| = 1,   \\
& \|v_{i,a}\| =1, \\
& v_{i,a}\cdot v_{i,b} = 0, \\
& \|v_{ij,a}\| =1, \label{eq:sdp_const_4}\\ 
& v_{i,a} \cdot v_{j,a} = v_{ij,a}\cdot v_0 ,\\
& v_{ij,a} \cdot v_{jk,a}= v_{ik,a}\cdot v_0, \label{eq:sdp_const_6}\\
& v_{ij,a}\cdot v_{jk,b} =0, \label{eq:sdp_const_6_} \\
& v_{ij,a}\cdot v_{ij,b} = -v_{ij,c}\cdot v_0, \label{eq:sdp_const_7}\\
& v_0,\ v_{i,a},\ v_{ij,a} \in \mathbb{R}^{N}, \nonumber
\end{align}
\begin{align*}
\forall \text{ distinct } i,j,k \in V,  \ \ \ \forall\text{ distinct } a,b,c\in \{1,2,3\},  \\
\text{ for a sufficiently big $N \in O(|V|^2)$}.\\
\end{align*} 
We denote the optimal value of the above SDP as OPT$_\text{SDP}$.
 \end{definition}

\begin{lemma}
The SDP (\ref{eq:sdp}) is a relaxation of the level-2 Lasserre ($L_2$). 
Thus, given an arbitrary matrix $M$ that satisfies the constraints of the SDP ($L_2$),
there exists a tuple of vectors $(v_0, (v_{i,a})_{i,a}, (v_{ij,a})_{i,j,a})$ that satisfies the constraints of (\ref{eq:sdp}), and its objective value for (\ref{eq:sdp}) is at least the objective value of $M$ for ($L_2$).
\end{lemma}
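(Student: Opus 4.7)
The plan is to start from a feasible matrix $M$ for $L_2(G)$, apply a Cholesky-style decomposition to obtain vectors $\{u_\Phi\}_{\Phi\in\mathcal{P}_n(2)}$ with $u_\Phi\cdot u_\Psi = M(\Phi,\Psi)$, and then select a small subset of these as the variables of (\ref{eq:sdp}). Specifically I would set $v_0 := u_I$, $v_{i,a} := u_{P_a^{(i)}}$, and $v_{ij,a} := u_{P_a^{(i)} P_a^{(j)}}$, where $a\in\{1,2,3\}$ indexes the Paulis $X,Y,Z$. Because $v_0\cdot v_0 = M(I,I) = 1$ and the objective of (\ref{eq:sdp}) expands to $\sum_{(i,j)\in E}\tfrac14 w_{ij}(M(I,I)-M(X_iX_j,I)-M(Y_iY_j,I)-M(Z_iZ_j,I))$, the objective values match exactly, so the core task is constraint verification.

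Next I would check each constraint of (\ref{eq:sdp}) by translating it into an identity about $M$ and then invoking the appropriate $L_2$ constraint. The unit-norm constraints $\|v_{i,a}\|=\|v_{ij,a}\|=1$ follow from $P_a^{(i)}P_a^{(i)} = I$ and $(P_a^{(i)}P_a^{(j)})^2 = I$ combined with the product rule $M(\Phi_1,\Psi_1)=M(\Phi_2,\Psi_2)$ whenever $\Phi_1\Psi_1=\Phi_2\Psi_2$. The orthogonality constraint $v_{i,a}\cdot v_{i,b}=0$ follows from the anticommutation of distinct single-qubit Paulis. The equalities (\ref{eq:sdp_const_6}) and the analogous one for $v_{i,a}\cdot v_{j,a}$ are again direct applications of the product rule after noticing, e.g., $(P_a^{(i)}P_a^{(j)})(P_a^{(j)}P_a^{(k)}) = P_a^{(i)}P_a^{(k)}$. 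Constraint (\ref{eq:sdp_const_6_}) follows because $P_a^{(j)}$ and $P_b^{(j)}$ anticommute for $a\neq b$, so the pair anticommutes as operators. Finally, positive-semidefiniteness of the Gram matrix is automatic, and the dimension bound $N\in O(|V|^2)$ is immediate since we only use $1 + 3|V| + 3\binom{|V|}{2}$ of the $u_\Phi$ vectors.

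The one place where a careful sign computation is required, and which I expect to be the main bookkeeping obstacle, is constraint (\ref{eq:sdp_const_7}): $v_{ij,a}\cdot v_{ij,b} = -v_{ij,c}\cdot v_0$ for distinct $a,b,c\in\{1,2,3\}$. Here one must compute
\begin{align*}
(P_a^{(i)}P_a^{(j)})(P_b^{(i)}P_b^{(j)}) &= (P_a^{(i)} P_b^{(i)})(P_a^{(j)} P_b^{(j)}) \\
&= (i\epsilon_{abc} P_c^{(i)})(i\epsilon_{abc} P_c^{(j)}) = -P_c^{(i)}P_c^{(j)},
\end{align*}
using that Paulis on different qubits commute and that $P_aP_b = i\epsilon_{abc}P_c$. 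The resulting minus sign is precisely what triggers the fourth $L_2$ constraint $M(\Phi_1,\Psi_1)=-M(\Phi_2,\Psi_2)$ when $\Phi_1\Psi_1=-\Phi_2\Psi_2$, giving $M(P_a^{(i)}P_a^{(j)},P_b^{(i)}P_b^{(j)}) = -M(P_c^{(i)}P_c^{(j)},I)$ as required.

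With all constraints verified and the objective preserved, the tuple $(v_0,(v_{i,a}),(v_{ij,a}))$ is feasible for (\ref{eq:sdp}) with objective equal to that of $M$ in ($L_2$), which completes the reduction. In short, (\ref{eq:sdp}) is obtained from ($L_2$) by geometrically reading off the Gram vectors of the identity, single-qubit Pauli, and two-qubit parallel-Pauli entries of $M$, and discarding all other constraints; so any feasible $M$ yields a feasible vector tuple with the same objective.
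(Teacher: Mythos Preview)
Your proposal is correct and follows essentially the same approach as the paper: restrict attention to the Gram vectors of $M$ indexed by $I$, the single-qubit Paulis, and the parallel two-qubit Paulis, then verify that the $L_2$ constraints specialize to those of (\ref{eq:sdp}) with matching objective. Your write-up is in fact more explicit than the paper's on the sign computation for constraint~(\ref{eq:sdp_const_7}) and on the dimension bound for $N$, but the underlying argument is identical.
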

\begin{proof}
Let us consider an SDP with the same objective function as SDP (\ref{eq:sdp}), and a subset of the constraints restricted to $T:=\{I, X_i, Y_i, Z_i, X_i X_j, Y_i Y_j, Z_i Z_j| i,j \in V, i\ne j\}\subset \mathcal P_n(2)$.
Because we now have fewer constraints, we get a relaxed SDP with respect to $L_2$. 
The constraints we get by restricting the indices are
\begin{align*}
&M(I,I) =1, \\
&M(P_i, P_i)  =1, \\
& M(P_i, Q_i) =0, \\
& M(P_i P_j, P_i P_j) = 1, \\
& M(P_i, P_j) = M(P_i P_j, I), \\
& M(P_i P_j, P_j P_k) = M(P_i P_k,I), \\
& M(P_i P_j, Q_j Q_k) = 0,\\
& M(P_i P_j, Q_i Q_j) = -M(R_i R_j,I),\\
& M \in \mathbb{R}^{|\mathcal P _n(2)| \times |\mathcal P_n(2)|},\\
& M \succeq 0, \\
& \forall \text{ distinct } i,j,k\in V, \ \forall \text{ distinct } P, Q, R\in \{X, Y, Z\}.
\end{align*}

Since $M$ is a PSD matrix, we can obtain a Cholesky decomposition $(w_s)_{s\in T}$ such that $w_s\cdot w_t =M(s,t)$, where $w_s, w_t \in \mathbb{ R}^N$ for all $s,t\in T$, with some natural number $N$. Now re-index the vectors so that 
\begin{align*}
    &v_0 = w_I, \\
    &v_{i,1} = w_{X_i}, \ v_{i,2} = w_{Y_i}, \ w_{i,3} = w_{Z_i}, \\
    &v_{ij,1} = w_{X_i X_j}, \ v_{ij,2} = w_{Y_i Y_j}, \ v_{ij,3} = w_{Z_i Z_j} 
\end{align*}
for all distinct $i,j \in V.$
Writing the above SDP constraints and the objective function in terms of the vectors $v_0, v_{i,a}, v_{ij,a}$ gives the desired SDP (\ref{eq:sdp}).
\end{proof}
Feasible solutions of the SDP (\ref{eq:sdp}) have useful properties that we will use later.

\begin{lemma}[Useful properties of SDP solutions]\label{lem:properties_of_sdp_sol}
Let $(v_0, (v_{i,a})_{i,a}, (v_{ij,a})_{i,j,a})$ be a feasible solution to the SDP (\ref{eq:sdp}).
Define a vector 
\begin{align}
    v_{ij}:=v_{ij,1} +v_{ij,2} +v_{ij,3} \label{eq:def_vij}
\end{align}
on each edge $(i,j)$. Then the following relations hold for all $i,j \in V$:
\begin{align}
&\|v_{ij}\|^2 = 3-2v_{ij}\cdot v_0. \label{eq:sdp_property_1} \\
&\|v_0 +v_{ij}\|^2 =4, \label{eq:sdp_property_2} \\
&v_{ij} \cdot v_{jk} =v_{ik}\cdot v_0. \label{eq:sdp_property_3}
\end{align}
In particular, $v_0+v_{ij}$ is on a sphere.
\end{lemma}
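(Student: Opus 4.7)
The plan is to expand each of the three identities using the definition $v_{ij} = v_{ij,1}+v_{ij,2}+v_{ij,3}$ and then apply the corresponding SDP constraints term-by-term. All three claims are essentially bookkeeping; no deep idea is needed, so the proof is a sequence of short computations.

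First, for (\ref{eq:sdp_property_1}), I would expand
\begin{align*}
\|v_{ij}\|^2 \;=\; \sum_{a=1}^{3}\|v_{ij,a}\|^2 \;+\; 2\sum_{a<b} v_{ij,a}\cdot v_{ij,b}.
\end{align*}
The diagonal part equals $3$ by constraint (\ref{eq:sdp_const_4}). For each pair $a<b$, I pick the unique $c\in\{1,2,3\}\setminus\{a,b\}$ and apply constraint (\ref{eq:sdp_const_7}) to rewrite $v_{ij,a}\cdot v_{ij,b} = -v_{ij,c}\cdot v_0$. Summing over the three pairs gives $-(v_{ij,1}+v_{ij,2}+v_{ij,3})\cdot v_0 = -v_{ij}\cdot v_0$, and (\ref{eq:sdp_property_1}) follows.

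Next, (\ref{eq:sdp_property_2}) is a direct corollary: expand $\|v_0+v_{ij}\|^2 = \|v_0\|^2 + 2 v_0\cdot v_{ij} + \|v_{ij}\|^2$, use $\|v_0\|=1$ and substitute (\ref{eq:sdp_property_1}); the $\pm 2v_0\cdot v_{ij}$ terms cancel and leave $1+3=4$.

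Finally, for (\ref{eq:sdp_property_3}), I would write
\begin{align*}
v_{ij}\cdot v_{jk} \;=\; \sum_{a=1}^{3} v_{ij,a}\cdot v_{jk,a} \;+\; \sum_{a\ne b} v_{ij,a}\cdot v_{jk,b},
\end{align*}
and note that the off-diagonal terms all vanish by (\ref{eq:sdp_const_6_}), while each diagonal term equals $v_{ik,a}\cdot v_0$ by (\ref{eq:sdp_const_6}). Summing over $a$ yields $v_{ik}\cdot v_0$, as required.

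The only thing to be careful about is the indexing: constraint (\ref{eq:sdp_const_7}) is stated only for \emph{distinct} $a,b,c$, and (\ref{eq:sdp_const_6}),(\ref{eq:sdp_const_6_}) require distinct $i,j,k$, so I have to make sure the expansions are broken into "diagonal" and "off-diagonal" pieces in exactly the way the constraints permit. There is no real obstacle here; the whole lemma is a warm-up that repackages the SDP constraints into a form convenient for the rounding analysis, and the final sentence "$v_0+v_{ij}$ is on a sphere" is just a verbal restatement of (\ref{eq:sdp_property_2}).
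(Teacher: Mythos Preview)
Your proposal is correct and matches the paper's own proof essentially line for line: the same diagonal/off-diagonal expansion of $\|v_{ij}\|^2$ via (\ref{eq:sdp_const_4}) and (\ref{eq:sdp_const_7}), the same substitution to get (\ref{eq:sdp_property_2}), and the same splitting of $v_{ij}\cdot v_{jk}$ using (\ref{eq:sdp_const_6}) and (\ref{eq:sdp_const_6_}). Your added remark about the distinctness requirements on $a,b,c$ and $i,j,k$ is a helpful sanity check the paper leaves implicit.
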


\begin{proof}
We derive the equations from the constraints of the SDP (\ref{eq:sdp}).
For the equation (\ref{eq:sdp_property_1}), expand the left-hand side to get
\begin{align*}
    \|v_{ij}\|^2 &= \|v_{ij,1} +  v_{ij,2} +v_{ij,3}\|^2 = \sum_{a\in[3]} \|v_{ij,a}\|^2 + 2(v_{ij,1} \cdot v_{ij,2} +v_{ij,2} \cdot v_{ij,3} +v_{ij,3} \cdot v_{ij,1}) \\
    &=3 - 2(v_{ij,1} +v_{ij,2} +v_{ij,3} )\cdot v_0 \\ 
    &= 3-2v_{ij}\cdot v_0,
\end{align*}
where the second line is from the SDP constraints (\ref{eq:sdp_const_4}) and (\ref{eq:sdp_const_7}). 
We get the equation (\ref{eq:sdp_property_2}) using the equation (\ref{eq:sdp_property_1}):
\begin{align*}
    \|v_0 + v_{ij}\|^2 = v_0\cdot v_0 + 2v_0\cdot v_{ij} + v_{ij}\cdot v_{ij} = 
    1 + 2 v_{ij} \cdot v_0 + 3-2v_{ij}\cdot v_0 = 4.
\end{align*}
To get the equation (\ref{eq:sdp_property_3}),
\begin{align*}
    v_{ij}\cdot v_{jk} = \sum_{a\in[3]} v_{ij,a} \cdot \sum_{b\in[3]} v_{jk,b}
    = \sum_{a\in[3]} v_{ij,a}\cdot v_{jk,a} = \sum_{a\in [3]} v_{ik,a} \cdot v_0 = v_{ik}\cdot v_0. 
\end{align*}
Here we used the SDP constraints (\ref{eq:sdp_const_6}) and (\ref{eq:sdp_const_6_}). 
\end{proof}

The following lemma is crucial for the design of the algorithm. 
It says that the total SDP value on $d$ edges with weight 1 that share a common vertex is at most $(d+1)/2$. 
For context, a uniformly random state has energy $d/4$, and a best bit string would yield energy $d/2.$
So the lemma says that the quantum advantage in energy per edge decreases as the degree increases.
This lemma is from \cite{PT21a} (and a quantum state version of it from \cite{AGM20}) and we prove it in a more elementary way without the Schur complement argument used in the original proof. 

\begin{lemma}[Monogamy of entanglement, \cite{PT21a} ] \label{lem:monogamy}
Given a feasible solution $(v_0, (v_{i,a})_{i,a}, (v_{ij,a})_{i,j,a})$ to SDP (\ref{eq:sdp}), a vertex $i\in V$, and a set $T$ of adjacent vertices to the node $i$,
the total contributions to the value of SDP (\ref{eq:sdp}) from the star graph formed by $i$ and $T$ is upper bounded by $(d+1)/2$, 
where $d$ is the number of vertices in $T$. 
In other words,
\begin{align}
    \frac{1}{4}\sum_{j\in T} (v_0 -v_{ij})\cdot v_0\le \frac{d + 1}{2}, \label{ineq:monogamy} 
\end{align}
where $(i,j)\in E \forall  j\in T$ and as before, $ v_{ij}=v_{ij,1} +v_{ij,2} +v_{ij,3}.$ 
\end{lemma}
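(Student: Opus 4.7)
The plan is to bypass the Schur-complement machinery of the original proof by working with the squared norm of a single, carefully chosen linear combination of $v_0$ and the star vectors $\{v_{ij}\}_{j\in T}$. Only the identities from Lemma \ref{lem:properties_of_sdp_sol} and Cauchy--Schwarz are used.

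First, I would consider the vector $w_c := c\, v_0 + \sum_{j\in T} v_{ij}$ for a real parameter $c$ to be chosen later. Since the Gram matrix of the SDP vectors is positive semidefinite, $\|w_c\|^2 \ge 0$. Expanding this squared norm and applying $\|v_{ij}\|^2 = 3 - 2\, v_{ij}\cdot v_0$ from \eqref{eq:sdp_property_1} on the diagonal terms, together with $v_{ij}\cdot v_{ik} = v_{jk}\cdot v_0$ (which is \eqref{eq:sdp_property_3} applied with common vertex $i$, using the symmetry $v_{ij} = v_{ji}$) on the off-diagonal terms, yields
\begin{align*}
0 \le c^2 + 3d + 2(c-1)\, A + 2B,
\end{align*}
where $A := \sum_{j\in T} v_{ij}\cdot v_0$ is the quantity I ultimately want to lower-bound and $B := \sum_{\{j,k\}\subset T} v_{jk}\cdot v_0$ is a nuisance cross-term sum that must be controlled.

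Second, I would show $B \le \binom{d}{2}$. Cauchy--Schwarz combined with $\|v_{jk}\|^2 = 3 - 2\, v_{jk}\cdot v_0$ gives $(v_{jk}\cdot v_0)^2 \le 3 - 2\, v_{jk}\cdot v_0$, which forces $v_{jk}\cdot v_0 \in [-3,1]$; summing over the $\binom{d}{2}$ pairs in $T$ gives the claim.

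Third, I would set $c = d+2$. The numbers align perfectly: using $2B \le d(d-1)$, the identity $(d+2)^2 + 3d + d(d-1) = 2(d+1)(d+2)$ turns the previous inequality into $A \ge -(d+2)$, and therefore
\begin{align*}
\tfrac{1}{4}\sum_{j\in T}(v_0 - v_{ij})\cdot v_0 \;=\; \tfrac{1}{4}(d - A) \;\le\; \frac{d+1}{2},
\end{align*}
which is the monogamy bound.

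The only genuine design choice in the plan is the coefficient $c$, and this is where the proof hinges. I would locate $c = d+2$ by viewing the resulting lower bound on $A$ as the function $c\mapsto -\tfrac{c^2+d^2+2d}{2(c-1)}$ (obtained after substituting $2B\le d(d-1)$) and minimizing over $c>1$; the minimum occurs exactly at $c=d+2$, and the bound is tight there, which is why the final algebra collapses without slack. Everything else is mechanical substitution.
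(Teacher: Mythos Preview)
Your proof is correct and is essentially the same argument as the paper's. Computing $\|c\,v_0+\sum_{j\in T}v_{ij}\|^2\ge 0$ is exactly what the paper does when it multiplies the Gram matrix of $\{v_0,v_{i1},\dots,v_{id}\}$ by the test vector $(x,1,\dots,1)$; the paper's symmetrization step is cosmetic (the test vector is already symmetric), and its discriminant condition on the resulting quadratic in $x$ is equivalent to your explicit optimization, landing at the same coefficient $x=c=d+2$ and the same bound on the cross term via $v_{jk}\cdot v_0\le 1$.
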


\begin{proof}
Without loss of generality, we can assume that $T=[d]=\{1,2,\dots,d\}$.
Then, the lemma is equivalent to the inequality
\begin{align*}
    \sum_{j\in [d]} v_{ij}\cdot v_0 \ge -d-2.
\end{align*}
The Gram matrix from the vectors $\{v_0, v_{i1},v_{i2},\dots, v_{id}\}$ is PSD, so we get
\begin{align*}
0\preceq
&    \begin{pmatrix}
    1 & v_0\cdot v_{i1} & \cdots & v_0\cdot v_{id} \\
    v_0\cdot v_{i1} & \|v_{i1}\|^2 & \cdots & v_{i1}\cdot v_{id} \\
    \vdots & \vdots & \ddots & \vdots\\
    v_0\cdot v_{id} & v_{i1}\cdot v_{id} & \dots & \|v_{id}\|^2 
    \end{pmatrix}\\
    \\
=
  &  \begin{pmatrix}
    1 & v_0\cdot v_{i1} & \cdots & v_0\cdot v_{id} \\
    v_0\cdot v_{i1} & 3- 2v_0\cdot v_{i1} & \cdots & v_{0}\cdot v_{1d} \\
    \vdots & \vdots & \ddots & \vdots\\
    v_0\cdot v_{id} & v_{0}\cdot v_{1d} & \dots & 3- 2v_0 \cdot v_{id} 
    \end{pmatrix},\\
\end{align*}
where the second line is from the equations (\ref{eq:sdp_property_2}) and (\ref{eq:sdp_property_3}).

Since an average of PSD matrices is a PSD matrix,
we average the above matrix over the permutation of the indices $\{1,2,\dots,d\}$ to get
\begin{align*}
    0\preceq
  &  \begin{pmatrix}
    1 & s & \cdots &s \\
    s & 3- 2s & \cdots & t \\
    \vdots & \vdots & \ddots & \vdots\\
    s & t & \dots & 3- 2s 
    \end{pmatrix},
\end{align*}

where
\begin{align*}
    s:= \mathop{\mathbb{E}}_{k\in [d]} \ [v_0 \cdot v_{ik}],\ \ \ t:=\mathop{\mathbb{E}}_{\substack{k,l\in [d] \\ k\ne l}}[v_0 \cdot v_{kl}].
\end{align*}

We want to lower bound the variable $s$ to prove the lemma. To this end, we multiply by a vector $(x, 1,\dots, 1)$ from the both sides to the above matrix, where $x\in \mathbb{R}$.
This gives
\begin{align*}
    0\le x^2 + 2dsx + (3-2s)d + td (d-1).
\end{align*}
This inequality should hold regardless of the value of $x$. Therefore,
\begin{align*}
    0\ge (ds)^2 - d(3-2s +t(d-1))= d^2 s^2 +2ds  -d(3+td-t).
\end{align*}

The right-hand side expression is a concave parabola in $s$, and the constant term is a non-increasing function of $t$. 
This means that increasing $t$ can only widen the range that $s$ could be in.
From the property of SDP solutions (\ref{eq:sdp_property_1}), we know that $-3\le s,t \le 1$, so we set $t=1$ to lower bound $s$.
By finding the roots of the right-hand side expression, we get $-(d+2)/d \le s \le 1$.
Therefore, we get $\sum_{j\in T} v_0\cdot v_{ij} =ds \ge -d-2$, proving the lemma.
\end{proof}

\section{Approximation algorithm}
In this section, we present a 0.562-approximation algorithm for the quantum analogue of Max Cut defined in Definition \ref{def:q_max_cut}, and we prove its approximation ratio. 
In this paper, the value of the arccosine function is always in $[0,\pi]$.

\begin{algorithm} 
\caption{0.562-approximation algorithm}\label{alg:alg}
\begin{algorithmic}[1]
    \State Given a weighted graph $G = (V,E,W)$,
    \State Solve the SDP (\ref{eq:sdp}), and get vectors $(v_0,(v_{i,a})_{i,a},(v_{ij,a})_{ij,a})$ that optimize the SDP.
    \State Generate a random bit string $|z\rangle\in \{0,1\}^n$ using the Goemans-Williamson rounding as follows:
    \State \indent Pick a uniformly random $a\in \{1,2,3\}$ and a uniformly random vector $r\in S^{N-1}$.
    \State \indent Assign $z_i = (\sgn(v_{i,a} \cdot r)+1)/2$, for all $i\in V$.
    \State Generate a parameterized state $|\phi(\vec\theta)\rangle$ from the bit string $|z\rangle$ as follows:
    \State \indent Compute a vector $v_{ij}:= v_{ij,1} + v_{ij,2} + v_{ij,3}$ on each $(i,j)\in E$.
    \State \indent Compute a circuit parameter $\theta_{ij} = f(\gamma_{ij})$ on each edge $(i,j)\in E$,
    \State \indent where $\gamma_{ij}:= -\frac{(v_0 + v_{ij})\cdot v_0}{\|v_0 +v_{ij}\|\|v_0\|}$, $f(\gamma):=\cos^{-1} \exp [-\alpha_0 \max( \gamma,0)]/2$, $\alpha_0:= 0.041.$
    \State \indent Compose a circuit $U(\vec{\theta}) = \prod\limits_{(i,j)\in E} \exp(i \theta_{ij} P(i) P(j))$, where $P(i) = X_i$ if $z_i = 1$, and \indent $P(i) = Y_i $ if $z_i =0$.
    \State \indent Evolve the bit string $|z\rangle$ on the circuit and get $|\phi(\vec\theta)\rangle = 
    U(\vec{\theta})|z\rangle$.
    \State Output the state $|\phi(\vec\theta)\rangle$.    
\end{algorithmic}
\end{algorithm}

The algorithm design is within the Goemans-Williamson framework \cite{GW95}.
We first solve the SDP (\ref{eq:sdp}).
Our rounding algorithm then maps the SDP solution to a quantum state in two stages.
In the first stage, a bit string $|z\rangle\ \in\{0,1\}^n$ is computed such that
on each edge $(i,j)\in E$, the bits $z_i$ and $z_j$ are likely to be assigned different values
if the level-1 vectors on the vertices are far from each other. 
When two different bit values are assigned to the two vertices of an edge, we say the edge is ``cut''.
The probability that an edge $(i,j)$ is cut by $|z\rangle$ is lower bounded in Lemma \ref{lem:pr_cut}.

In the second stage, the bit string is entangled by a parameterized quantum circuit. 
The circuit consists of mutually commuting 2-qubit gates $\exp[i\theta_{ij} P(i) P(j)]$ on each edge $(i,j)\in E$, 
where $\theta_{ij}$ is a real parameter, and $P(i)=Y_i$ if $z_i =0,$ and $P(i)=X_i$ if $z_i=1$.
For example, say $|z_i z_j\rangle =|01\rangle$ was assigned to $(i,j)\in E$ in the first stage.
The gate would evolve the state to 
\begin{align*}
    \exp[i \theta_{ij} Y_i X_j]|01\rangle = \cos \theta_{ij} |01\rangle -\sin\theta_{ij}|10\rangle.
\end{align*}
The energy on the edge is an increasing function of $\theta_{ij} $ in $[0,\pi/4]$.
Thus, the rounding algorithm will assign a higher value to $\theta_{ij}$ if the SDP value on the edge $(v_0- v_{ij})\cdot v_0$ is higher.

The analysis becomes more complicated when there are other gates connected to $(i,j)$ to interfere.
The monogamy of entanglement relation stated in Lemma \ref{lem:monogamy}
implies that the assignment state cannot be too strongly entangled on all edges, meaning the neighboring parameters should be considered as well for optimization on $(i,j)$. 
We optimize the circuit parameter against the monogamy of entanglement via the function $f$.

\begin{lemma}\label{lem:pr_cut}
Suppose an SDP solution $(v_0,(v_{i,a})_{i,a}, (v_{ij,a})_{ij,a})$ and a bit string $|z\rangle$ are as in the statement of Algorithm \ref{alg:alg},
and a vector $v_{ij}$ is defined by the SDP solution as $v_{ij}=v_{ij,1}+v_{ij,2}+v_{ij,3}$. 
Let a real number $\gamma_{ij}:= -\frac{(v_0 + v_{ij})\cdot v_0}{\|v_0 + v_{ij}\|\|v_0\|} $ be the normalized inner product of the vectors $v_0 + v_{ij}$ and $-v_0$.
Then the probability that an edge $(i,j)\in E$ is cut by the string $|z\rangle$ is lower bounded as
\begin{align}
\Pr[z_i\ne z_j] \ge \frac{\alpha_{GW}}{3}( 2 +  \gamma_{ij}),
\end{align}
where the coefficient $\alpha_{GW}$ is the Goemans-Williamson constant 
\cite{GW95}
\begin{align*}
    \alpha_{GW} := \min_{t \in [-1,1]} \frac{1}{\pi}\frac{\arccos  t}{1/2 -t/2} =0.8785\cdots.
\end{align*}
\end{lemma}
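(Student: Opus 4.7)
The plan is to apply the standard Goemans-Williamson hyperplane rounding analysis on each index $a \in \{1,2,3\}$ separately, average the resulting bound over $a$, and then translate the resulting inner-product quantity $v_{ij}\cdot v_0$ into the normalized form $\gamma_{ij}$ using the norm identity from Lemma \ref{lem:properties_of_sdp_sol}.

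First I would condition on the choice of $a$ and note that, for fixed $a$, the bits $z_i$ and $z_j$ are obtained by the usual hyperplane rounding applied to the unit vectors $v_{i,a}$ and $v_{j,a}$. The classical Goemans-Williamson identity gives
\begin{align*}
\Pr_r[\sgn(v_{i,a}\cdot r)\ne \sgn(v_{j,a}\cdot r)] = \frac{\arccos(v_{i,a}\cdot v_{j,a})}{\pi},
\end{align*}
and the definition of $\alpha_{GW}$ yields the standard lower bound $\arccos(t)/\pi \ge \alpha_{GW}(1-t)/2$ for every $t\in[-1,1]$. Averaging uniformly over $a\in\{1,2,3\}$ and using the SDP constraint $v_{i,a}\cdot v_{j,a} = v_{ij,a}\cdot v_0$ together with the definition $v_{ij} = v_{ij,1}+v_{ij,2}+v_{ij,3}$ gives
\begin{align*}
\Pr[z_i\ne z_j] \ge \frac{\alpha_{GW}}{2}\left(1 - \frac{1}{3}\sum_{a}v_{i,a}\cdot v_{j,a}\right) = \frac{\alpha_{GW}}{6}(3 - v_{ij}\cdot v_0).
\end{align*}

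Next I would convert to the $\gamma_{ij}$ variable. Equation (\ref{eq:sdp_property_2}) of Lemma \ref{lem:properties_of_sdp_sol} gives $\|v_0+v_{ij}\| = 2$, and by assumption $\|v_0\|=1$, so the definition of $\gamma_{ij}$ simplifies to $\gamma_{ij} = -(1 + v_{ij}\cdot v_0)/2$, i.e.\ $v_{ij}\cdot v_0 = -1 - 2\gamma_{ij}$. Substituting into the previous inequality gives
\begin{align*}
\Pr[z_i\ne z_j] \ge \frac{\alpha_{GW}}{6}\bigl(3 - (-1-2\gamma_{ij})\bigr) = \frac{\alpha_{GW}}{3}(2+\gamma_{ij}),
\end{align*}
which is exactly the claimed bound.

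There is essentially no obstacle here: the only nontrivial ingredients are (i) the single-index GW hyperplane bound, which is standard, and (ii) the identity $\|v_0+v_{ij}\|=2$, which was established as equation (\ref{eq:sdp_property_2}). The mild subtlety is to remember to average the GW inequality before rewriting in terms of $v_{ij}\cdot v_0$, so that the factor of $1/3$ appears in the right place to match the definition of $\gamma_{ij}$.
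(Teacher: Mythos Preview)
Your proposal is correct and follows essentially the same argument as the paper's proof: both apply the standard Goemans--Williamson hyperplane bound to each index $a$, average over $a$, use the SDP constraint $v_{i,a}\cdot v_{j,a}=v_{ij,a}\cdot v_0$ to rewrite the sum as $v_{ij}\cdot v_0$, and then invoke the norm identity $\|v_0+v_{ij}\|=2$ from Lemma~\ref{lem:properties_of_sdp_sol} to convert to $\gamma_{ij}$. The only cosmetic difference is that the paper carries the normalization $\|v_0+v_{ij}\|\|v_0\|$ explicitly one line longer before substituting, whereas you simplify $\gamma_{ij}$ first and then substitute; the content is identical.
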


\begin{proof}
See Appendix \ref{appendix:pr_cut}.
\end{proof}

The following lemma deals with the performance of the parameterized circuit $U(\vec\theta)$ constructed in the algorithm. The parameterized family of circuits was proposed in \cite{AGM20}, with the assumption that the parameters are the same across the edges.
We modify the analysis of \cite{AGM20} to include the case that we need, the case where different parameters can be assigned to different edges.

\begin{lemma}[Energy contribution from $(i,j)$]\label{lem:energy_on_edge}
Let a quantum state $|\phi(\vec\theta)\rangle$ and a bit string $|z\rangle$ be as in the statement of Algorithm \ref{alg:alg}, and circuit parameters 
$\theta_{ij}\ge 0$ for all $(i,j)\in E$. 
Then we can lower bound the expected energy contribution from the edge $(i,j)$ as
\begin{align}
    \langle\phi(\vec\theta)|4H_{ij}|\phi(\vec\theta)\rangle \ \ge \ 
    \begin{cases}
    1 + 
    \sin(2\theta_{ij}) ( A_{ij} + B_{ij})  + A_{ij} B_{ij} & \text{if } z_i \ne z_j, \\
    0 & \text{if } z_i = z_j,
    \end{cases} \label{eq:energy_ij}
\end{align}
where 
\begin{align*}
A_{ij}:=    \prod_{k\in N(i)\setminus \{j\}} \cos (2\theta_{ik}), \ \  \ \
B_{ij}:=    \prod_{k\in N(j) \setminus \{i\}} \cos (2\theta_{kj}).
\end{align*}
and $N(i):= \{j\in V| (i,j)\in E\}$
for all $i\in V$.
\end{lemma}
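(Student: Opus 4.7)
\emph{Plan.} When $z_i = z_j$ the bound $\langle\phi(\vec\theta)|4H_{ij}|\phi(\vec\theta)\rangle \ge 0$ is immediate because $H_{ij}$ is a projector; I focus on $z_i \ne z_j$ and assume without loss of generality $z_i = 0$, $z_j = 1$ (the case $z_i=1,\,z_j=0$ follows by swapping $i\leftrightarrow j$). Then $P(i) = Y_i$, $P(j) = X_j$, and the edge gate is $G_{ij} = \exp(i\theta_{ij} Y_iX_j)$. Since all gates in $U$ commute, I factor $U = G_{ij}\, M_i\, M_j\, U'$ with $M_i = \prod_{k\in N(i)\setminus\{j\}} G_{ik}$, $M_j = \prod_{k\in N(j)\setminus\{i\}} G_{jk}$, and $U'$ the product of gates on edges disjoint from $\{i,j\}$. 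Because $U'$ is supported away from qubits $i,j$, it commutes with each of $X_iX_j$, $Y_iY_j$, $Z_iZ_j$ and drops out of the expectation. Using $4H_{ij} = I - X_iX_j - Y_iY_j - Z_iZ_j$, I compute each $\langle z|U^\dagger O U|z\rangle$ in the Heisenberg picture by conjugating through $G_{ij}$ first and then $M_iM_j$.

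A direct calculation yields $G_{ij}^\dagger X_iX_j G_{ij} = \cos(2\theta_{ij}) X_iX_j - \sin(2\theta_{ij}) Z_i$, $G_{ij}^\dagger Y_iY_j G_{ij} = \cos(2\theta_{ij}) Y_iY_j + \sin(2\theta_{ij}) Z_j$, and $G_{ij}^\dagger Z_iZ_j G_{ij} = Z_iZ_j$. Because every gate in $M_i$ contains $Y_i$, $M_i$ commutes with $Y_i$, and because $M_i$ is supported away from qubit $j$, it commutes with every operator on qubit $j$; the analogous statements hold for $M_j$. Conjugating further gives, for example, $U^\dagger X_iX_j U = \cos(2\theta_{ij})(M_i^\dagger X_iM_i)X_j - \sin(2\theta_{ij})(M_i^\dagger Z_iM_i)$. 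Expanding $M_i^\dagger Z_iM_i$ inductively one gate at a time (using $G_{ik}^\dagger Z_i G_{ik} = \cos(2\theta_{ik}) Z_i + \sin(2\theta_{ik}) X_iP(k)$ and $G_{ik'}^\dagger X_iP(k) G_{ik'} = \cos(2\theta_{ik'})X_iP(k) - \sin(2\theta_{ik'})Z_iP(k)P(k')$) produces a sum indexed by $S\subseteq N(i)\setminus\{j\}$ whose term is $Z_i\prod_{k\in S}P(k)$ when $|S|$ is even and $X_i\prod_{k\in S}P(k)$ when $|S|$ is odd, with coefficient $(-1)^{\lfloor|S|/2\rfloor}\prod_{k\in S}\sin(2\theta_{ik})\prod_{k\notin S}\cos(2\theta_{ik})$. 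Since $P(k) \in\{X_k,Y_k\}$ has zero expectation in $|z_k\rangle$, only $S = \emptyset$ survives, giving $\langle z|M_i^\dagger Z_iM_i|z\rangle = A_{ij}$ and analogously $\langle z|M_j^\dagger Z_jM_j|z\rangle = -B_{ij}$ (the sign flip from $\langle 1|Z|1\rangle = -1$); the $M_i^\dagger X_iM_i$ and $M_j^\dagger Y_jM_j$ expansions are headed by $X_i$ and $Y_j$ and therefore vanish in $|z\rangle$. Combining, $\langle z|U^\dagger X_iX_j U|z\rangle = -\sin(2\theta_{ij}) A_{ij}$ and $\langle z|U^\dagger Y_iY_j U|z\rangle = -\sin(2\theta_{ij}) B_{ij}$.

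The crux is $\langle z|U^\dagger Z_iZ_j U|z\rangle$, which reduces to $\langle z|(M_i^\dagger Z_iM_i)(M_j^\dagger Z_jM_j)|z\rangle$. Multiplying the two expansions indexed by subsets $S$ and $T$, the vanishing of $\langle z_k|P(k)|z_k\rangle$ on every qubit $k\notin\{i,j\}$ forces $S = T \subseteq N(i)\cap N(j)\setminus\{i,j\}$ with both sizes even, so that $P(k)^2 = I$ cancels on every common neighbor and qubits $i,j$ carry $Z_iZ_j$. The decisive sign calculation is that the $Z$-headed coefficients on both sides carry the same factor $(-1)^{|S|/2}$ (the odd-$|S|$ sign patterns differ between $z_i=0$ and $z_j=1$, but those terms do not contribute), making the product $\alpha_S^{(i)}\alpha_S^{(j)} = \prod_{k\in S}\sin(2\theta_{ik})\sin(2\theta_{jk})\prod_{k\in N(i)\setminus\{j\}\setminus S}\cos(2\theta_{ik})\prod_{k\in N(j)\setminus\{i\}\setminus S}\cos(2\theta_{jk}) \ge 0$, where non-negativity uses $\theta_{ij}\in[0,\pi/4]$ (guaranteed by the definition of $f$). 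Keeping only the $S = \emptyset$ summand already gives $A_{ij}B_{ij}$, so $\langle z|U^\dagger Z_iZ_j U|z\rangle = -\sum_S \alpha_S^{(i)}\alpha_S^{(j)} \le -A_{ij}B_{ij}$. Summing the three contributions, $\langle\phi(\vec\theta)|4H_{ij}|\phi(\vec\theta)\rangle = 1 + \sin(2\theta_{ij})(A_{ij}+B_{ij}) - \langle z|U^\dagger Z_iZ_j U|z\rangle \ge 1 + \sin(2\theta_{ij})(A_{ij}+B_{ij}) + A_{ij}B_{ij}$. The main obstacle is precisely this sign bookkeeping: verifying inductively that the two independent Pauli-string expansions on qubits $i$ and $j$ carry matching $(-1)^{|S|/2}$ patterns in their $Z$-headed terms, so that the common-neighbor interference in the $Z_iZ_j$ channel adds constructively rather than destructively.
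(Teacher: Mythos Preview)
Your proof is correct and follows essentially the same route as the paper's: both compute $\langle X_iX_j\rangle$, $\langle Y_iY_j\rangle$, $\langle Z_iZ_j\rangle$ by pushing the commuting gates past the observable, expand the resulting product $\prod_k(\cos 2\theta - i\sin 2\theta\,P(i)P(k))$, and observe that only terms with an even subset $S=T\subseteq N(i)\cap N(j)$ survive in the $Z_iZ_j$ channel, then drop all but $S=\emptyset$. Your Heisenberg-picture bookkeeping of the $(-1)^{|S|/2}$ signs is more explicit than the paper's, which simply writes down the final formula; and you state explicitly (as the paper leaves implicit) that the dropped summands are nonnegative because $\theta_{ij}\in[0,\pi/4]$ via the definition of $f$, which is what actually licenses the inequality.
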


\begin{proof}
See Appendix \ref{appendix:energy_on_edge}.
\end{proof}

\begin{theorem}[Main] \label{thm:main}
Algorithm \ref{alg:alg} is a 0.562-approximation algorithm for Quantum Max Cut defined in the Definition \ref{def:q_max_cut}, with the output state $|\phi(\vec\theta)\rangle$ having energy $\langle\phi(\vec\theta)|H|\phi(\vec\theta)\rangle \ge 0.562 \text{OPT}$.
\end{theorem}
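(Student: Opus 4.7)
The plan is to bound the expected energy of $|\phi(\vec\theta)\rangle$ edge by edge against the corresponding SDP contribution and to show that every per-edge ratio is at least $0.562$. By linearity of expectation
\[
\mathbb{E}[\langle\phi(\vec\theta)|H|\phi(\vec\theta)\rangle] = \sum_{(i,j)\in E} w_{ij}\,\mathbb{E}[\langle\phi(\vec\theta)|H_{ij}|\phi(\vec\theta)\rangle],
\]
and since only $|z\rangle$ is random while $\vec\theta$ is determined by the SDP solution, Lemmas \ref{lem:pr_cut} and \ref{lem:energy_on_edge} combine (noting $\theta_{ij}=f(\gamma_{ij})\ge 0$ so the hypothesis of Lemma \ref{lem:energy_on_edge} holds) to give
\[
\mathbb{E}[\langle\phi|H_{ij}|\phi\rangle] \ge \frac{\alpha_{GW}}{12}(2+\gamma_{ij})\bigl[1+\sin(2\theta_{ij})(A_{ij}+B_{ij})+A_{ij}B_{ij}\bigr].
\]
Using $\|v_0+v_{ij}\|=2$ from Lemma \ref{lem:properties_of_sdp_sol}, the SDP contribution of edge $(i,j)$ rewrites cleanly as $\tfrac{1}{2}w_{ij}(1+\gamma_{ij})$, and since SDP (\ref{eq:sdp}) is a relaxation of Quantum Max Cut (Lemma \ref{lem:q_lasserre} together with the relaxation to (\ref{eq:sdp})), it suffices to prove the per-edge inequality
\[
\frac{\alpha_{GW}}{6}(2+\gamma_{ij})\bigl[1+\sin(2\theta_{ij})(A_{ij}+B_{ij})+A_{ij}B_{ij}\bigr] \ge 0.562\,(1+\gamma_{ij}).
\]

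Next I would control the interference terms $A_{ij}, B_{ij}$ using the monogamy of entanglement. Because $\cos(2\theta_{ik})=\exp[-\alpha_0\max(\gamma_{ik},0)]$, I have $\log A_{ij}=-\alpha_0\sum_{k\ne j}\max(\gamma_{ik},0)$. Applying Lemma \ref{lem:monogamy} to the subset $T\subseteq N(i)$ consisting of neighbors with $\gamma_{ik}>0$ (and also including $j$ when $\gamma_{ij}>0$) yields $\sum_{k\in T}\gamma_{ik}\le 1$; removing $j$ from $T$ gives $\sum_{k\ne j}\max(\gamma_{ik},0)\le 1-\max(\gamma_{ij},0)$. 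By the symmetric argument at vertex $j$,
\[
A_{ij},\ B_{ij}\ \ge\ \exp\!\bigl[-\alpha_0(1-\max(\gamma_{ij},0))\bigr].
\]
Substituting these lower bounds decouples the per-edge inequality from the rest of the graph and reduces it to a single-variable inequality in $\gamma:=\gamma_{ij}\in[-1,1]$.

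The main obstacle will be verifying this one-variable inequality for the chosen constant $\alpha_0=0.041$. On the region $\gamma\le 0$ the gate is trivial ($\theta_{ij}=0$) and the inequality collapses to $\tfrac{\alpha_{GW}}{6}(2+\gamma)(1+e^{-2\alpha_0})\ge 0.562(1+\gamma)$, which is tightest at $\gamma=0$; on $\gamma\in(0,1]$ it becomes an inequality in $e^{-\alpha_0\gamma}$, $\sqrt{1-e^{-2\alpha_0\gamma}}$, and $e^{-\alpha_0(1-\gamma)}$, tightest at the isolated-edge extreme $\gamma=1$ (where monogamy forces $A_{ij}=B_{ij}=1$ and $\sin(2\theta_{ij})$ alone must carry the bound). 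The constant $\alpha_0=0.041$ is essentially what balances these two extreme bindings so that each endpoint yields a ratio marginally above $0.562$; a short calculus argument or interval check over $[0,1]$ then handles the interior. Summing the per-edge inequality against $w_{ij}$ gives $\mathbb{E}[\langle\phi|H|\phi\rangle]\ge 0.562\cdot\text{OPT}_{\text{SDP}}\ge 0.562\cdot\text{OPT}$, completing the theorem.
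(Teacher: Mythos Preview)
Your proposal is correct and follows essentially the same route as the paper: bound the expected energy edge by edge via Lemmas \ref{lem:pr_cut} and \ref{lem:energy_on_edge}, rewrite the SDP contribution as $\tfrac12 w_{ij}(1+\gamma_{ij})$, use monogamy (Lemma \ref{lem:monogamy}) on the positive-$\gamma$ neighbors to get $A_{ij},B_{ij}\ge e^{-\alpha_0(1-\max(\gamma_{ij},0))}$, and reduce to a one-variable check on $[0,1]$ (the $\gamma\le 0$ case being dominated by $\gamma=0$). The only imprecision is the phrase ``tightest at $\gamma=1$'': in fact the minimum over $[0,1]$ is attained (to numerical precision) at \emph{both} endpoints $\gamma=0$ and $\gamma=1$, which is exactly the balancing you note two lines later; the interior values are strictly larger, so your endpoint reasoning together with a routine interval check is indeed what certifies the $0.562$ bound.
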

\begin{proof}
We will show that the expected energy of the random state $|\phi(\vec\theta)\rangle$ on an arbitrary edge $(i,j)\in E$ is at least the SDP contribution from that edge, $w_{ij}(v_0-v_{ij})\cdot v_0/4$, times 0.562. 
Summing over all the edges, this will give us $\mathbb{E}[\langle\phi(\vec\theta)|H|\phi(\vec\theta)\rangle] =\sum_{(i,j)\in E} \mathbb{E}[\langle\phi(\vec\theta)|w_{ij} H_{ij}|\phi(\vec\theta)\rangle]
\ge 0.562 \sum_{(i,j)\in E} w_{ij}(v_0-v_{ij})\cdot v_0/4 
= 0.562 \text{OPT}_\text{SDP}\ge 0.562 \text{OPT}$, proving the theorem.

Now, let us focus on a single edge $(i,j)\in E.$ 
The expected energy of the state $|\phi(\vec\theta)\rangle$ on the edge $(i,j)$ is $\mathbb{E}[\langle\phi(\vec\theta)|w_{ij} H_{ij}|\phi(\vec\theta)\rangle]$, whereas the SDP value contribution is $w_{ij} (v_0 -v_{ij}) \cdot v_0/4$. Therefore we want to lower bound their ratio

\begin{align}
\alpha_{ij} := \frac{\mathbb{E}\left[\langle\phi(\vec\theta)| 4H_{ij}|\phi(\vec\theta)\rangle\right]}{(v_0 -v_{ij}) \cdot v_0}. \label{eq:main_def_alphaij}    
\end{align}
The edge $(i,j)$ is randomly cut by the bit string $|z\rangle$ in the algorithm statement, 
and we can lower bound the energy by an expression that depends on whether the edge $(i,j)$ is cut or not. 
By applying Lemmas \ref{lem:pr_cut} and \ref{lem:energy_on_edge}, we get
\begin{align}
    \mathbb{E}[\langle\phi(\vec\theta)|4 H_{ij}|\phi(\vec\theta)\rangle] &\ge
    \Pr[z_i\ne z_j]\big[ 1 + 
    \sin(2\theta_{ij}) ( A_{ij} + B_{ij})  + A_{ij} B_{ij}\big]     +\Pr[z_i = z_j] \cdot 0 \label{eq:main_bound_edge_en}\\
    &\ge 
    \frac{\alpha_{GW}}{3}( 2 +  \gamma_{ij})\big[ 1 + 
    \sin(2\theta_{ij}) ( A_{ij} + B_{ij})  + A_{ij} B_{ij}\big], \label{eq:main_bound_edge_cut_pr}
\end{align}
where the variables $A_{ij}, B_{ij}, \gamma_{ij}$ are as defined in the Lemmas; 
\begin{align*}
A_{ij}&:=    \prod_{k\in N(i)\setminus \{j\}} \cos (2\theta_{ik}), \ \  \ \
B_{ij}:=    \prod_{k\in N(j) \setminus \{i\}} \cos (2\theta_{kj}), \\
     \gamma_{ij} &:= -\frac{(v_0 + v_{ij})\cdot v_0}{\|v_0 + v_{ij}\|\|v_0\|} . 
\end{align*}

We have
\begin{align}
  \gamma_{ij} =  -\frac{(v_0 + v_{ij})\cdot v_0}{\|v_0 + v_{ij}\|\|v_0\|} = - \frac{1+ v_{ij}\cdot v_0}{2} \label{eq:main_gamma}
\end{align}
by the definition of $\gamma_{ij}$, so
we get $(v_0 - v_{ij})\cdot v_0 = 2(1+ \gamma_{ij})$.
Thus, we can lower bound the right-hand side of (\ref{eq:main_def_alphaij}) as
\begin{align}
    \alpha_{ij} \ge \frac{\alpha_{GW}}{6}\big[1 +\sin(2\theta_{ij})(A_{ij} + B_{ij} ) +A_{ij} B_{ij}\big]\frac{2 +  \gamma_{ij}}{1 +  \gamma_{ij}}. \label{ineq:main_approx_ratio}
\end{align}

Case 1: $-1\le\gamma_{ij} \le 0$. The circuit parameter on the edge becomes $\theta_{ij} = \cos^{-1} \exp  0 /2= 0$. Therefore, we get 
\begin{align*}
    \alpha_{ij} \ge \frac{\alpha_{GW}}{6}\big[1  +A_{ij} B_{ij}\big]\frac{2 +  \gamma_{ij}}{1 +  \gamma_{ij}}.
\end{align*}
Note that the value of $\gamma_{ij}$ does not affect the $A_{ij} B_{ij}$ term, so the right-hand side is a decreasing function in $\gamma_{ij}$ in the region $\gamma_{ij} \le 0.$ This means that it suffices to look at the case where $\gamma_{ij} = 0$ which is a sub-case of Case 2.

Case 2: $0\le \gamma_{ij} \le 1$. We have $\theta_{ij} = f(\gamma_{ij}) = \cos^{-1}\exp[-\alpha_0 \max(\gamma_{ij},0)]/2 \ge 0$. We want to lower bound the terms $A_{ij}, B_{ij}$ to lower bound the expression on the right-hand side of (\ref{ineq:main_approx_ratio}). By the definition of the function $f$, we have
\begin{align*}
    A_{ij} &=  \prod_{k\in N(i)\setminus \{j\}} \cos (2\theta_{ik}) \\
    &=\exp\sum_{k\in N(i)\setminus\{j\}} \log\cos ( 2\theta_{ik}) \\
    &=\exp\sum_{k\in N(i)\setminus\{j\}} \log\cos ( 2f(\gamma_{ij})) \\
    &=\exp\sum_{k\in N(i)\setminus\{j\}} \log \cos  \cos^{-1} \exp [-\alpha_0\max( \gamma_{ik},0)] \\
    &=\exp\sum_{k\in N(i)\setminus\{j\}}  -\alpha_0\max( \gamma_{ik},0) \\
    &=\exp\left[-\alpha_0 \sum_{k\in N(i)^+ \setminus \{j\}}   \gamma_{ik}\right], 
\end{align*}
where the set $N(i)^+$ is defined to be $N(i)^+ := \{j | (i,j)\in E,\ \gamma_{ij} > 0\}.$
We will use the monogamy inequality (\ref{ineq:monogamy}) to lower bound the above expression.
From the equation (\ref{eq:main_gamma}), we have
\begin{align}
\sum_{k\in N(i)^+}  \gamma_{ik} &= -\sum_{k\in N(i)^+}\frac{1+ v_{ij}\cdot v_0}{2} = \sum_{k\in N(i)^+}\frac{-2 + (v_0 -  v_{ij})\cdot v_0}{2} \label{eq:main_aij1}\\
&\le -|N(i)^+| + |N(i)^+| +1 \label{eq:main_aij2}\\
&=1.\nonumber
\end{align}
Therefore we have
\begin{align}
    A_{ij}  &=\exp\left[-\alpha_0 \sum_{k\in N(i)^+ \setminus \{j\}}   \gamma_{ik}\right] \nonumber\\
    &=\exp\left[-\alpha_0 \left[\sum_{k\in N(i)^+}    \gamma_{ik} - \gamma_{ij}\right]\right]\nonumber\\
    &\ge \exp[-\alpha_0 (1 -\gamma_{ij})].\label{ineq:main_Aij_final}
\end{align}
Note that the product term $A_{ij}$ is now lower bounded by a function of a single variable $\gamma_{ij}.$
We can apply the same analysis to get an inequality
\begin{align}
B_{ij}\ge \exp[-\alpha_0 (1-\gamma_{ij})]. \label{ineq:main_Bij_final}
\end{align}
By plugging in the relations (\ref{ineq:main_Aij_final}) and (\ref{ineq:main_Bij_final}) to the RHS of (\ref{ineq:main_approx_ratio}), we get
\begin{align*}
    \alpha_{ij} &\ge \frac{\alpha_{GW}}{6}\big[1 +\sin(2\theta_{ij})(A_{ij} + B_{ij} ) +A_{ij} B_{ij}\big]\frac{2 +  \gamma_{ij}}{1 +  \gamma_{ij}}. \\
    &\ge \frac{\alpha_{GW}}{6}\left[1 + 2 \sin(\cos^{-1} e^{-\alpha_0  \gamma_{ij}}) e^{-\alpha_0 (1-\gamma_{ij})} + e^{-2\alpha_0 (1- \gamma_{ij})}\right] \frac{2+  \gamma_{ij}}{ 1+  \gamma_{ij}} \\
    &\ge \frac{\alpha_{GW}}{6}\left[1 + 2\sqrt{1- e^{-2\alpha_0  \gamma_{ij}}} e^{-\alpha_0 (1-\gamma_{ij})} + e^{-2\alpha_0 (1- \gamma_{ij})}\right] \frac{2+  \gamma_{ij}}{ 1+  \gamma_{ij}} \\
    &\ge \min_{\gamma\in[0,1]}\frac{\alpha_{GW}}{6}\left[1 + 2\sqrt{1- e^{-2\alpha_0  \gamma}} e^{-\alpha_0 (1-\gamma)} + e^{-2\alpha_0 (1- \gamma)}\right] \frac{2+  \gamma}{ 1+  \gamma} \\
    & = 0.562\cdots.
\end{align*}
This proves the theorem. The coefficient $\alpha_0$ is optimized so that the final ratio is maximized. 
\end{proof}

We make a few comments on the choice of the function $f$ because it is the most non-trivial part in the algorithm and the idea behind the function might find place in future designs of parameterized quantum circuits.

On the one hand, we have the monogamy of entanglement inequality which upper bounds the total SDP value on an arbitrary star graph. 
More specifically, it upper bounds the sum of the input parameters $\sum \gamma_{ik}$.
On the other hand, we want to lower bound a product of cosines of the output parameters $\prod \cos \theta_{ij}$ on the star graph.

Therefore, we want to somehow translate the linear condition into a multiplicative condition of cosines.
The most natural approach is to select a function $f$ such that the addition operator on the constraint side works \textit{homomorphically} on the energy lower bound side. 
In other words, if we have $ \cos(f(x))\cos(f(y)) =\cos(f(x +y))  $, then we can lower bound the right hand side by the monogamy inequality.
To this end, we set $f(x) = \cos^{-1} e^{ -ax}$ for a non-negative constant $a$.

To define the function in the negative domain, an important observation is that having a negative $\theta_{ij}$ is never better than having $\theta_{ij}=0$ in terms of the energy on $(i,j)$ or on its neighboring edges.
This observation leaves setting $\theta_{ij}=0$ in the negative domain as the only choice to keep $f$ non-decreasing.

\section{Acknowledgements}
The author appreciates the support from Post-doc Support Program from Sunkyunkwan University.

\appendix

\section{Omitted proofs} 
\subsection{Proof of Lemma \ref{lem:q_lasserre}} \label{appendix:q_lasserre}
\begin{proof}
Given a quantum state $|\phi\rangle$, define a matrix $M$ by matrix elements 
\begin{align*}
    M(\Phi,\Psi) :=\Re[ \langle\phi| \Psi\Phi|\phi\rangle] = \langle\phi|(\Phi\Psi + \Psi\Phi)|\phi\rangle/2
\end{align*} for all $\Phi, \Psi \in \mathcal P_{n}(k)$.

We will show that $M$ satisfies the constraints of the SDP (\ref{eq:lasserre}), and that the objective value of $M$ is equal to the energy of $|\phi\rangle$ to prove the lemma.
The first constraint is satisfied because of a Pauli identity $\Phi^2 =I$ for all Pauli $\Phi$, because we get $M(\Phi, \Phi) = \langle\phi|(I+I)|\phi\rangle/2 =1$. 
The second constraint comes directly from the definition of $M.$

For the other conditions, note that all the Pauli matrices are Hermitian. 
Therefore if $\Phi_1\Psi_1 = \Phi_2 \Psi_2$, we have $\Psi_1 \Phi_1 = (\Phi_1 \Psi_1)^\dagger = (\Phi_2 \Psi_2)^\dagger = \Psi_2 \Phi_2$. So the third constraint is satisfied, and the fourth for similar reasons.

To see that $M$ is PSD, observe that the matrix $M$ is real and symmetric because the Pauli matrices are Hermitian. 
Now, consider that for an arbitrary real vector $v \in \mathbb R ^{|\mathcal P_n(k)|},$ 
we have 
\begin{align*}
v^T M v & = \sum_{\Phi,\Psi}v_\Phi M(\Phi,\Psi)v_\Psi \\
& = \sum_{\Phi,\Psi}v_\Phi \langle\phi| (\Phi\Psi + \Psi\Phi)|\phi\rangle v_\Psi/2 \\
& =  \langle\phi| \sum_{\Phi,\Psi}(v_\Phi \Phi\Psi v_\Psi +  v_\Psi\Psi\Phi v_\Phi)|\phi\rangle/2 \\
& :=\langle\phi| \Phi' \Phi'|\phi\rangle = \|\Phi'|\phi\rangle\|^2\ge 0,
\end{align*}
where $\Phi':= \sum_\Phi v_\Phi \Phi$.

Finally to see that the objective value of $M$ is equal to the energy of $|\phi\rangle,$
\begin{align*}
\langle\phi|H|\phi\rangle &= \sum_{(i,j)\in E}\langle\phi|\frac{1}{4}w_{ij}(I- X_i X_j - Y_i Y_j - Z_i Z_j)|\phi\rangle \\
&= \sum_{(i,j)\in E}\frac{1}{4}w_{ij}( 1- \langle\phi|X_i X_j|\phi\rangle - \langle\phi|Y_i Y_j |\phi\rangle- \langle\phi|Z_i Z_j|\phi\rangle) \\ 
&=\sum_{(i,j) \in E} \frac{1}{4}w_{ij } (M(I,I) - M(X_i X_j, I) - M(Z_i Z_j, I) - M(Z_i Z_j, I)).
\end{align*}
Therefore the SDP (\ref{eq:lasserre}) is a relaxation of Quantum Max Cut.
\end{proof}

\subsection{Proof of Lemma \ref{lem:pr_cut}}\label{appendix:pr_cut}
\begin{proof}
We prove the lemma from the SDP constraints and the properties of the SDP solutions that are stated in Lemma \ref{lem:properties_of_sdp_sol}. We have

\begin{align}
\Pr[z_i \ne z_j] 
&= \frac{1}{3}\sum_{a\in \{1,2,3\}} \Pr_{r\in S^{N-1}}[\sgn(v_{i,a}\cdot r)=-\sgn(v_{j,a}\cdot r)]\label{eq:cut_lem_line_2}\\
&= \frac{1}{3} \sum_{a\in \{1,2,3\}} \frac{\arccos (v_{i,a}\cdot v_{j,a})}{\pi} \label{eq:cut_lem_line_3}\\
&=   \frac{1}{3} \sum_{a\in \{1,2,3\}}  \frac{\arccos (v_{i,a}\cdot v_{j,a})/\pi}{1/2 - v_{i,a}\cdot v_{j,a}/2}\left[\frac{1}{2} - \frac{v_{i,a}\cdot v_{j,a}}{2}\right]\nonumber \\
&\ge \
\frac{1}{3}\min_{t \in [-1,1]} \frac{\arccos (t)/\pi}{1/2 -t/2}
\sum_{a\in\{1,2,3\}} \left[\frac{1}{2} - \frac{v_{i,a}\cdot v_{j,a}}{2} \right] \nonumber\\
&=\frac{\alpha_{GW}}{3}\left[\frac{3}{2} -\frac{\sum_{a\in\{1,2,3\}} v_{ij,a}\cdot v_0 }{2}\right] \nonumber\\
&=\frac{\alpha_{GW}}{3}\left[ 2  - \frac{(v_0 + v_{ij})\cdot v_0}{2}\right]  \nonumber\\
&=\frac{\alpha_{GW}}{3}\left[ 2  - \frac{(v_0 + v_{ij})\cdot v_0}{\|v_0 + v_{ij}\|\|v_0\|}\right]  \label{eq:cut_lem_line_8}\\
&= \frac{\alpha_{GW}}{3}( 2 +  \gamma_{ij}).\nonumber
\end{align}
We get the equation (\ref{eq:cut_lem_line_2}) from the construction of the algorithm (the line 4 of Algorithm \ref{alg:alg}),
the equation (\ref{eq:cut_lem_line_3}) from averaging over the random vector $r$ over the sphere $S^{N-1}$ as in the original Goemans-Williamson paper \cite{GW95},
and finally the equation (\ref{eq:cut_lem_line_8}) from the equation (\ref{eq:sdp_property_2}) of Lemma \ref{lem:properties_of_sdp_sol}.
\end{proof}

\subsection{Proof of Lemma \ref{lem:energy_on_edge}}\label{appendix:energy_on_edge}
\begin{proof}
We largely take the analysis from \cite{AGM20} except that we allow different parameters on different edges. For the case of $z_i\ne z_j$, we will show something stronger; namely,

\begin{align*}
    \langle\phi(\vec\theta)|4H_{ij}|\phi(\vec\theta)\rangle 
    &= 1 + 
    \sin(2\theta_{ij}) \left[ \prod_{k\in N(i)\setminus \{j\}} \cos (2\theta_{ik}) 
    + \prod_{k\in N(j) \setminus \{i\}} \cos (2\theta_{kj})\right] \\
&+\sum_{\substack{\Delta'\subset \Delta_{ij} \\ |\Delta'|:\text{ even}}} \prod_{k\in \Delta'} \frac{\sin(2\theta_{ik})\sin(2\theta_{kj})}{\cos(2\theta_{ik}) \cos(2\theta_{kj}
)} \prod_{k\in N(i)\setminus \{j\}} \cos (2\theta_{ik}) 
     \prod_{k\in N(j) \setminus \{i\}} \cos (2\theta_{kj}),
\end{align*}
where $\Delta_{ij}:= N(i)\cap N(j).$ 
The inequality in the lemma is obtained by running the first summation just on $\Delta'=\phi$.

By symmetry, we can assume $z_i =0, z_j =1$. Then by construction, we have $P(i)P(j) = Y_i X_j$.
From the decomposition $4H_{ij} = 1 -X_i X_j - Y_i Y_j - Z_i Z_j$, we calculate the energy contribution $\langle\phi(\vec\theta)|4H_{ij}|\phi(\vec\theta)\rangle$ term by term. For the $X_i X_j$ term,
we have
\begin{align*}
&\langle\phi(\vec\theta)|X_i X_j|\phi(\vec\theta)\rangle \\
&=
\langle z|\prod_{(k,l)\in E} \exp[- i \theta_{kl} P(k) P(l)]X_i X_j \prod_{(k,l)\in E} \exp[i \theta_{kl} P(k) P(l)]|z\rangle \\
&=
\langle z|\prod_{k \in N(i)} \exp[- i 2\theta_{ik} P(i) P(k)]X_i X_j |z\rangle \\
&=
\langle z|\prod_{k \in N(i)} \left[\cos (2\theta_{ik}) - i\sin(2\theta_{ik})Y_i P(k) \right]X_i X_j |z\rangle \\
&=
-\langle z|i \sin(2\theta_{ij})Y_i X_j\prod_{k \in N(i)\setminus \{j\}} \cos (2\theta_{ik})  X_i X_j |z\rangle \\
&=
-\sin (2\theta_{ij}) \prod_{k \in N(i)\setminus \{j\}} \cos (2\theta_{ik}).
\end{align*}
Here, we used the standard commutator relations of Pauli matrices. From the line 4 and 5, we used the fact that the Pauli matrices on each qubit in $N(i)\setminus\{j\}$ should multiply to $I$ or $Z$ to yield a non-zero contribution.

Similar calculations yield 
\begin{align*}
    \langle\phi(\vec\theta) | Y_i Y_j |\phi(\vec\theta)\rangle = -\sin (2\theta_{ij}) \prod_{k \in N(j)\setminus \{i\}} \cos (2\theta_{kj}).
\end{align*}

For the $Z_i Z_j$ term,
\begin{align}
&\langle\phi(\vec\theta)|Z_i Z_j|\phi(\vec\theta)\rangle \nonumber\\
&=
\langle z|\prod_{(k,l)\in E} \exp[- i \theta_{kl} P(k) P(l)]Z_i Z_j \prod_{(k,l)\in E} \exp[i \theta_{kl} P(k) P(l)]|z\rangle \nonumber\\
&=
\langle z|\prod_{k \in N(i)\setminus\{j\}} \exp[- i 2\theta_{ik} P(i) P(k)] \prod_{k \in N(j)\setminus\{i\}} \exp[-i 2\theta_{kj} P(k) P(j)]Z_i Z_j|z\rangle  \label{eq:zz_line_3}\\
&=
\langle z|\prod_{k \in N(i)\setminus\{j\}} \big[\cos (2\theta_{ik}) - i \sin (2\theta_{ik})P(i)P(k)\big]\nonumber \\
&\quad \quad \quad \quad \quad \quad \quad \quad \quad \quad \quad
\prod_{k \in N(j)\setminus\{i\}} \big[\cos (2\theta_{kj}) - i \sin (2\theta_{kj})P(k)P(j)\big] Z_i Z_j|z\rangle \nonumber \\
&= 
\langle z|\prod_{k \in \Delta_{ij}} \big[\cos (2\theta_{ik}) - i \sin (2\theta_{ik})P(i)P(k)\big] \big[\cos (2\theta_{kj}) - i \sin (2\theta_{kj})P(k)P(j)\big] \label{eq:zz_line_5} \nonumber\\
&\quad \quad \quad \quad \quad \quad \quad \quad \quad \quad \quad
\prod_{k \in N(i)\setminus N(j)\setminus\{j\}}\cos(2\theta_{ik})  \prod_{k \in N(j)\setminus N(i)\setminus\{i\}}\cos(2\theta_{kj}) Z_i Z_j|z\rangle \\
&= 
\langle z|\prod_{k\in \Delta_{ij}} \big[\cos (2\theta_{ik}) \cos (2\theta_{kj}) -  \sin (2\theta_{ik})\sin (2\theta_{kj})P(i)P(j)\big]  \nonumber\\
&\quad \quad \quad \quad \quad \quad \quad \quad \quad \quad \quad
\prod_{k \in N(i)\setminus N(j)\setminus\{j\}}\cos(2\theta_{ik})  \prod_{k \in N(j)\setminus N(i)\setminus\{i\}}\cos(2\theta_{kj}) Z_i Z_j|z\rangle \nonumber\\
&=
\langle z|\prod_{k \in N(i)\cap N(j)} \left[1 -  \frac{\sin (2\theta_{ik}) \sin (2\theta_{kj})}{ \cos (2\theta_{ik}) \cos (2\theta_{kj})}Y_i  X_j\right] \prod_{k\in N(i)\setminus \{j\}} cos (2\theta_{ik}) \prod_{k\in N(j) \setminus \{i\}} cos (2\theta_{kj})Z_i Z_j |z\rangle \nonumber\\
&= 
- \sum_{\substack{\Delta'\subset \Delta_{ij} \\ |\Delta'|:\text{ even}}} \prod_{k\in \Delta'} \frac{\sin(2\theta_{ik})\sin(2\theta_{kj})}{\cos(2\theta_{ik}) \cos(2\theta_{kj}
)} \prod_{k\in N(i)\setminus \{j\}} \cos (2\theta_{ik}) 
     \prod_{k\in N(j) \setminus \{i\}} \cos (2\theta_{kj}) \label{eq:zz_line_last}
\end{align}

To get the equation (\ref{eq:zz_line_3}), we used the fact that only the Pauli matrices with non-trivial commutator with $Z_i Z_j$ survive. 
The equation (\ref{eq:zz_line_5}) and (\ref{eq:zz_line_last}) are from the fact that an even number of $P(i)$ should be applied on each qubit $i$ to give a non-zero contribution, because the state $|z\rangle$ is an eigenstate of the matrix $Z_i Z_j$.

In the case of $z_i = z_j$, the inequality (\ref{eq:energy_ij}) is 
trivial, because the Hamiltonian $H_{ij}$ is a PSD matrix.
In fact, 
we can express the energy as a function of the parameters and the connectivity on the graph as in the earlier case, but we choose not to show since the calculation is similar and is not necessary for the main theorem.
\end{proof}

\bibliography{ref}

\end{document}